\numberwithin{equation}{section}
\theoremstyle{plain}
\newtheorem{lem}{\protect\lemmaname}
\theoremstyle{plain}
\newtheorem{cor}{\protect\corollaryname}
\theoremstyle{plain}
\newtheorem{thm}{\protect\theoremname}
\renewcommand{\phi}{\varphi}
\DeclareMathOperator*{\essinf}{ess\,inf}
\theoremstyle{remark}
\providecommand{\corollaryname}{Corollary}
\providecommand{\lemmaname}{Lemma}
\providecommand{\theoremname}{Theorem}
\begin{document}
\title{\textbf{Optimal Auction Design under Costly Learning}}
\author{Kemal Ozbek\thanks{Department of Economics, University of Southampton, University Road,
Southampton, S017 1BJ, United Kingdom. Email: mkemalozbek@gmail.com}}
\maketitle
\begin{abstract}
We study optimal auction design in an independent private values environment
where bidders can endogenously---but at a cost---improve information
about their own valuations. The optimal mechanism is two-stage: at
stage-$1$ bidders register an information acquisition plan and pay
a transfer; at stage-$2$ they bid, and allocation and payments are
determined. We show that the revenue-optimal stage-$2$ rule is the
Vickrey--Clarke--Groves (VCG) mechanism, while stage-$1$ transfers
implement the optimal screening of types and absorb information rents
consistent with incentive compatibility and participation. By committing
to VCG ex post, the pre-auction information game becomes a potential
game, so equilibrium information choices maximize expected welfare;
the stage-$1$ fee schedule then transfers an optimal amount of payoff
without conditioning on unverifiable cost scales. The design is robust
to asymmetric primitives and accommodates a wide range of information
technologies, providing a simple implementation that unifies efficiency
and optimal revenue in environments with endogenous information acquisition.
\end{abstract}

\section{Introduction}

This paper studies the design of auctions for selling a single indivisible
good under costly learning to a finite set of risk-neutral bidders
with independent private values (IPV). Two benchmark objectives organize
much of the modern theory. A welfare-maximizing mechanism seeks to
allocate the good to the participant who values it most whenever trade
is feasible, thereby implementing an efficient assignment (\citet{Vickrey61};
\citet{Clarke71}; \citet{Groves73}). A revenue-maximizing mechanism
instead chooses rules that maximize the seller\textquoteright s expected
gains subject to incentive compatibility and participation constraints
(\citet{Myerson81}). Although these problems are posed in the same
informational environment, they typically prescribe different allocation
rules and, crucially, they need not be implemented by the same auction
format---a distinction emphasized throughout the auction design literature
(\citet{Myerson81}; \citet{RS81}).

A central theme of the present paper is that another, often under-appreciated
source of divergence between revenue and welfare may arise when information
acquisition is endogenous. In many applications, bidders do not begin
with perfectly precise knowledge of their own values; rather, they
can invest in information---due diligence, inspections, expert reports---before
bidding. The auction format that the seller commits to use after any
learning has taken place shapes the marginal value of information
for each participant. Formats that make a bidder\textquoteright s
allocation probability and payment more sensitive to her realized
private information tend to induce more information acquisition; formats
that dampen this sensitivity tend to induce less (\citet{Persico00}).
Because equilibrium information choices shift the distribution of
bids and the composition of winners, they influence both the revenue
that can be extracted and the welfare that can be realized (\citet{BV02};
\citet*{CSZ04}).

In our setting, bidders can flexibly---but at a cost---improve what
they know about their own valuations before bidding. We even allow
them, in principle, to shift the mean of their belief at an additional
cost; although this option is not exercised in equilibrium, keeping
it available delivers a richer and more robust modeling environment.
We study two-stage mechanisms: at stage-$1$, each bidder pays a registration
fee and commits to an experiment (a signal technology); at stage-$2$,
bidders submit bids in the auction. We work with a direct-revelation
formulation in which a bidder reports at stage-$1$ her primitive
type (a cost-scale parameter and a mean-value parameter) to determine
the experiment she may run and the fee she pays, and at stage-$2$
reports her realized valuation to determine allocation and payment.
Types are ex-ante random with a commonly known distribution, and equilibrium
in experiment choices is Bayesian Nash. We first assume that registered
experiments are verifiable by the seller and later relax this assumption,
showing that the main results survive under weaker verification.

Our main result shows that, unlike in static models without learning,
revenue maximization and welfare maximization coincide in the optimal
two-stage design we characterize. The stage-$2$ allocation rule is
VCG---delivering ex-post efficiency and truthful bidding---while
stage-$1$ lets bidders choose any experiment compatible with their
reported types after paying the corresponding fees. Stage-$1$ incentive-compatibility
and individual-rationality pin down a fee schedule that is independent
of unverifiable cost-scale parameters yet robustly dependent on verifiable
mean-value parameters via the registered experiments. The rationale
for leaving experiment choice unconstrained is that, once stage-$2$
is VCG, the pre-auction information game is an exact potential game
(\citet{MS96}); therefore, any maximizer of the potential---here,
expected welfare---coincides with a Bayesian Nash equilibrium of
the bidders\textquoteright{} experiment-choice game.

The literature identifies two complementary channels on the divergence
of revenue and welfare maximization objectives. In the first, bidders
choose how much to learn before bidding. Here, the format\textquoteright s
payoff sensitivity to private signals determines the equilibrium investment
in information; when expected payoffs respond sharply to incremental
precision---because the probability of winning or the effective price
is highly sensitive to the realized signal---participants optimally
acquire more information (\citet{Persico00}). This matters for welfare
because precision improves allocative matching; it matters for revenue
because sharper selection can raise the expected virtual surplus that
the seller can extract. In the second channel, the seller chooses
what bidders learn by designing the information structure itself---deciding
how informative pre-auction disclosures or signals will be---jointly
with the auction rule. This transforms information release into a
policy lever for revenue and can, in general, steer the allocation
away from efficiency (\citet{BP07}).\footnote{Relatedly, classic analyses of interdependent or affiliated signals
show how information release and format interact to shape both efficiency
and revenue, even beyond IPV (\citet{MW82}).}

These forces qualify standard format comparisons. In the symmetric
IPV benchmark without endogenous learning, a first-price auction and
a second-price auction with a common reserve implement the same allocation
rule and thus are revenue-equivalent. With endogenous information
acquisition, however, the allocation rule becomes format-dependent,
precisely because the stage-$2$ rule changes bidders\textquoteright{}
learning incentives. Two formats that were revenue-equivalent absent
learning can cease to be equivalent once learning is endogenous. From
a welfare perspective, formats and transfer rules that internalize
the social value of information---so that each bidder\textquoteright s
private incentive to learn matches the societal benefit of more precise
allocation---are desirable; in private-value environments, VCG-type
mechanisms can implement both efficient learning and efficient allocation
under natural conditions (\citet{BV02}). From a revenue perspective,
the seller\textquoteright s optimal policy typically co-determines
reserves, disclosure, and, when feasible, multi-stage procedures that
govern when and how information is produced; optimal reserve policies
in this setting generally differ from the no-learning Myerson benchmark
because they must account for the equilibrium response of information
acquisition (\citet{BP07}; \citet{Shi12}; \citet*{GMSZ21}).

In sum, even within the IPV paradigm, welfare- and revenue-maximizing
designs can diverge for structural reasons---exclusion and asymmetry
in the Myerson framework---and may diverge for strategic reasons
once information acquisition is considered. Viewing the stage-$2$
auction format as a commitment to stage-$1$ learning incentives makes
clear why the designer\textquoteright s objective, the induced information
environment, and the implementable allocation rule are tightly intertwined.
Our results nevertheless show that, when registered experiments are
verifiable, a single two-stage mechanism can align these forces and
simultaneously achieve both welfare and revenue objectives. These
findings have practical and theoretical implications. Practically,
when pre-auction due diligence matters (e.g., spectrum, mineral rights,
complex procurement), formats should be judged not only by static
allocation rules but by the learning incentives they create; committing
to VCG at stage-$2$, paired with a transparent, experiment-based
registration fee, can deliver both high revenue and efficiency. Theoretically,
we refine revenue equivalence: formats that are revenue-equivalent
without learning need not remain so once learning is endogenous, but
a suitable two-stage commitment can restore alignment by making the
induced allocation rule format-invariant through efficient information
acquisition.

The paper is organized as follows. Section $2$ presents the framework,
notation, and the incentive and participation constraints, and states
the objective functions. Section $3$ develops the main analysis,
characterizing and comparing the revenue- and welfare-maximizing mechanisms.
Section $4$ extends the model to allow (costly) auditing of registered
experiments and cost parameters and examines the robustness of the
results. Section $5$ provides a discussion of the related literature
and highlights our contributions to the research area of optimal auction
design under costly learning. Section $6$ concludes.

\section{Framework}

We adapt the framework considered in \citet{Myerson81} to study the
optimal auction design when buyers can flexibly learn. 

\paragraph*{Environment:}

There is one seller who has a single object to sell. Let $N=\{i_{1},\ldots,i_{n}\}$
denote the set of bidders with $n\geq2$. We assume that the seller
and the bidders are risk neutral. Bidders have types $\theta_{i}=(r_{i},s_{i})$
drawn from the space $\Theta_{i}=[0,1]^{2}$ equipped with the Borel
sigma-algebra. We assume that bidder types $\theta_{i}$ are drawn
i.i.d. according to the distribution $F_{rs}=F_{r}\times F_{s}$ where
$F_{r}$ and $F_{s}$ are measurable c.d.f.'s with continuous density
functions $f_{r}$ and $f_{s}$ over the space $[0,1]$, respectively.
Let $\Theta=\Theta_{1}\times...\times\Theta_{n}$ denote the set of
type profiles including every bidder and let $\Theta_{-i}$ denote
the set of type profiles $\Theta_{1}\times...\times\Theta_{i-1}\times\Theta_{i+1}\times...\times\Theta_{n}$
excluding bidder $i$. Let $\theta=(r,s)$ denote a generic profile
of types $(r_{1},s_{1};...;r_{n},s_{n})\in\Theta$ and similarly,
let $\theta_{-i}=(r_{-i},s_{-i})$ denote the vector of realized types
$(r_{1},s_{1};...;r_{i-1},s_{i-1};r_{i+1},s_{i+1};...;r_{n},s_{n})\in\Theta_{-i}$
excluding bidder $i$. We sometimes use the notation $\theta_{i1}$
to indicate $r_{i}$ and $\theta_{i2}$ to indicate $s_{i}$ when
$\theta_{i}=(r_{i},s_{i})$. Let $F_{rs}^{n}(\hat{r},\hat{s})=(F_{r}\times F_{s})^{n}(\hat{r},\hat{s})$
denote the joint distribution $\prod_{i\in N}F_{r}(\hat{r}_{i})\times F_{s}(\hat{s}_{i})$
and let $F_{rs}^{n-1}(\hat{r}_{-i},\hat{s}_{-i})=(F_{r}\times F_{s})^{n-1}(\hat{r}_{-i},\hat{s}_{-i})$
denote the joint distribution $\prod_{j\neq i}F_{r}(\hat{r}_{j})\times F_{s}(\hat{s}_{j})$
excluding bidder $i\in N$. 

For each $\alpha\in[0,1]$, let $z_{\alpha}=(a+\alpha.(b-a))$ denote
the mean value over the interval $[a,b]$ where $a>0$ is the minimum
and $b>a$ is the maximum level each individual is willing to pay
for the object. Let $D=\Delta([a,b])$ denote the set of probability
density functions over $[a,b]$ and assume that $D$ is equipped with
the weak topology. We call each distribution $f\in D$ an ``experiment''.
For each $\alpha\in[0,1]$, let $D(\alpha)=\{f\in\Delta([a,b]):E_{f}(z)=\int_{[a,b]}z.f(z).dz=z_{\alpha}\}$
be the set of experiments with center (i.e., mean value) $\mu(f)$
equal to $z_{\alpha}$. Note that each $D(\alpha)$ is closed in
$D$ and we have $D=\cup_{\alpha\in[0,1]}D(\alpha)$. Let $T=[a,b]^{n}$
and for each $i\in N$, let $T_{i}=[a,b]$ and $T_{-i}=[a,b]^{n-1}$.

For each type $\theta_{i}=(r_{i},s_{i})\in\Theta_{i}$, let $c_{\theta_{i}}:D\to\mathbb{R}_{+}$
denote an information processing cost function. We assume that each
$c_{\theta_{i}}$ satisfies the following properties: (i) $c_{\theta_{i}}(.)$
is increasing in $\theta_{i1}$, (ii) $c_{\theta_{i}}(\delta_{z})=0$
if $z=z_{\theta_{i2}}$, (iii) $c_{\theta_{i}}(.)$ is increasing
in convex-order, (iv) $c_{\theta_{i}}(.)$ is convex, and (v) $c_{\theta_{i}}(.)$
is lower semi-continuous (l.s.c.). A feasible example of the abstract
cost function $c_{\theta_{i}}$ can be given as $\hat{c}_{\theta_{i}}(f)=r_{i}\int k(|z_{i}-s_{i}|)f(z_{i})dz_{i}$
for all $f\in D$ for some (strictly) increasing convex function $k:\mathbb{R}_{+}\to\mathbb{R}_{+}$
with $k(0)=0$. As such, while $r_{i}$ can be interpreted as a cost-scale,
$s_{i}$ can be interpreted as a mean-value parameter.\footnote{The idea is that each experiment $f\in D$ determines the likelihood
$f(z)$ of each class of posterior distributions with mean $z\in[a,b]$
where $z_{s_{i}}$ denotes the mean of the initial belief of a bidder
with type $\theta_{i}=(r_{i},s_{i})$. Bidders are allowed to consider
experiments $f_{i}\in D$ with centers $\mu(f_{i})$ such that $\mu(f_{i})\neq z_{s_{i}}=z_{\theta_{i2}}$.
Moreover, depending on the particular cost function, the cost of a
degenerate experiment (i.e., $\delta_{z}$ for some $z\in[a,b]$)
may or may not be zero. For instance, while $\hat{c}_{\theta_{i}}(f)\neq0$
for any $f$ with $\mu(f)\neq z_{\theta_{i2}}$, for the cost function
$\bar{c}_{\theta_{i}}(f)=r_{i}\int k(|z_{i}-\mu(f)|)f(z_{i})dz_{i}$,
we have $\bar{c}_{\theta_{i}}(\delta_{z})=0$ for any $z\in[a,b]$. }

\paragraph*{Mechanism:}

The seller uses a two-stage direct-revelation mechanism with a reservation
price to sell the good. For simplicity, assume that the object yields
no value to the seller, and so the seller optimally sets the reservation
price equal to $0$. At stage-$1$, the seller offers to each bidder
$i$ a menu of contracts $(\sigma_{i},\tau_{i})$ such that $\sigma_{i}:\Theta_{i}\to D$
assigns an experiment $\sigma_{i}(\theta_{i})\in D(\theta_{i2})$
and $\tau_{i}:\Theta_{i}\to\mathbb{R}$ charges a fee to each type
$\theta_{i}=(r_{i},s_{i})$. Each bidder pays the fee $\tau_{i}(\theta_{i})$
to conduct the experiment $\sigma_{i}(\theta_{i})$ after revealing
her type $\theta_{i}$ to the seller. The seller is assumed to be
able to confirm that the experiment $\sigma_{i}(\theta_{i})$ is conducted
once a type $\theta_{i}\in\Theta_{i}$ is registered. Later we relax
this assumption. 

At stage-$2$, each bidder observes a value realization $t_{i}\in[a,b]$
within the support of her experiment $\sigma_{i}(\theta_{i})$ and
submits a bid $t'_{i}\in[a,b]$ where $t'_{i}$ is not necessarily
equal to $t_{i}$. The seller then allocates the object to the buyers
according to the vector of probabilities $p(t',\sigma(\theta))=(p_{1}(t',\sigma(\theta)),...,p_{n}(t',\sigma(\theta)))\geq0$
such that $\sum_{i\in N}p_{i}(t',\sigma(\theta))=1$ together with
$[x_{1}(t',\sigma(\theta)),..,x_{n}(t',\sigma(\theta))]\in\mathbb{R}^{n}$
as the vector of payments $x(t',\sigma(\theta))$ where $t'=(t'_{1},...,t'_{n})\in[a,b]^{n}$
is the vector of bids submitted by the buyers at stage-$2$ and $\sigma(\theta)=(\sigma_{1}(\theta_{1}),...,\sigma_{n}(\theta_{n}))$
is the vector of experiments registered by the buyers at stage-$1$.
Let $(\sigma_{-i},\tau_{-i})=(\sigma_{1},\tau_{1};...,\sigma_{i-1},\tau_{i-1};\sigma_{i+1},\tau_{i+1};...;\sigma_{n},\tau_{n})$
denote the vector of menus of contracts given to the bidders excluding
buyer $i$. For any given profile of types $\theta_{-i}$, let $\sigma_{-i}(\theta_{-i})=(\sigma_{1}(\theta_{1}),...,\sigma_{i-1}(\theta_{i-1}),\sigma_{i+1}(\theta_{i+1}),...,\sigma_{n}(\theta_{n}))$
denote the vector of experiments excluding bidder $i$. Let $M$ denote
the set of all two-stage mechanisms and let $m=(p,x,\sigma,\tau)$
denote a generic element of it.

\paragraph*{Payoffs:}

For any vector of realized types other than bidder $i$, let $\sigma_{-i}[\theta_{-i}]$
denote the joint density function $\sigma_{1}(\theta_{1})\times...\times\sigma_{i-1}(\theta_{i-1})\times\sigma_{i+1}(\theta_{i+1})\times...\times\sigma_{n}(\theta_{n})$
and let its corresponding c.d.f. be denoted by $F_{\sigma_{-i}[\theta_{-i}]}$.
Given a two-stage mechanism $m=(p,x,\sigma,\tau)$ in $M$, let 
\begin{align*}
u_{i}^{m}(t'_{i},t_{i}|f_{i},\sigma_{-i}) & =E_{\theta_{-i}}[E_{\sigma_{-i}[\theta_{-i}]}[t_{i}p_{i}(t'_{i},t_{-i};f_{i},\sigma_{-i}(\theta_{-i}))-x_{i}(t'_{i},t_{-i};f_{i},\sigma_{-i}(\theta_{-i}))]]\\
 & =\int_{\Theta_{-i}}\int_{T_{-i}}[t_{i}p_{i}(t'_{i},t_{-i};f_{i},\sigma_{-i}(\theta_{-i}))\\
 & \;\;\;-x_{i}(t'_{i},t_{-i};f_{i},\sigma_{-i}(\theta_{-i}))]dF_{\sigma_{-i}[\theta_{-i}]}(t_{-i})\;d(F_{r}\times F_{s})^{n-1}(\theta_{-i})
\end{align*}
be the stage-$2$ expected payoff bidder $i$ receives when she submits
a bid $t'_{i}\in[a,b]$ while she observes a value $t_{i}\in[a,b]$
given that she registered the experiment $f_{i}$ while all other
bidders truthfully registered an experiment from their offered menus.

Whenever bidder $i$ truthfully submits her bid, that is when $t'_{i}=t_{i}$,
we simply write $u_{i}^{m}(t{}_{i}|f_{i},\sigma_{-i})$. Let $u_{i}^{m}(f_{i}|\sigma_{-i})=E_{f_{i}}[u_{i}^{m}(t{}_{i}|f_{i},\sigma_{-i})]$
be the interim expected payoff bidder $i$ obtains when she bids truthfully
according to the distribution $f_{i}$. Given $m=(p,x;\sigma,\tau)\in M$,
let $v_{i}^{m}(\theta'_{i}|\theta_{i})=u_{i}^{m}(\sigma_{i}(\theta'_{i})|\sigma_{-i})-c_{\theta_{i}}(\sigma_{i}(\theta'_{i}))-\tau_{i}(\theta'_{i})$
denote the net expected payoff of bidder $i$ when other bidders truthfully
follow the distributions $\sigma_{-i}(\theta_{-i})$ in their offered
menus depending on their type realizations $\theta_{-i}=(r_{-i},s_{-i})$
while bidder $i$ with true type $\theta_{i}\in\Theta_{i}$ reveals
$\theta'_{i}=(r'_{i},s'_{i})$ to register the experiment $\sigma_{i}(\theta'_{i})\in D(\theta'_{i2})$
with the associated information processing cost $c_{\theta_{i}}(\sigma_{i}(\theta'_{i}))$
and transfer fee $\tau(\theta'_{i})$. Whenever bidder $i$ truthfully
submits her type $\theta_{i}=(r_{i},s_{i})$, we write $v_{i}^{m}(\theta_{i})$.
Let 
\[
U_{i}(m)=E_{\theta_{i}}[v_{i}^{m}(\theta_{i})]=\int_{\Theta_{i}}v_{i}^{m}(\theta_{i})d(F_{r}\times F_{s})(\theta_{i})
\]
 denote the ex-ante net expected payoff bidder $i$ receives when
she plans to truthfully reveal her type at stage-$1$ and reveal her
true value at stage-$2$ while all other bidders also truthfully submit
their types and bids given the profile of assignment rules $\sigma_{-i}$.

\paragraph*{Constraints:}

The seller does not observe the true type $\theta_{i}=(r_{i},s_{i})$
of a bidder nor the realized value $t_{i}$. The seller, therefore,
must provide the right incentives to each buyer to participate in
the mechanism, submit truthfully their realized types and values,
and conduct their intended experiments.

Given a two-stage auction mechanism $m=(p,x,\sigma,\tau)$, the constraints
listed below constitute the two-stage individual rationality and incentive
compatibility constraints for each buyer $i$ for each $\theta_{i}\in\Theta_{i}$:
\begin{enumerate}
\item $u_{i}^{m}(t{}_{i}|\sigma_{i}(\theta_{i}),\sigma_{-i})\geq0$ for
each $t_{i}\in[a,b]$ in the support of $\sigma_{i}(\theta_{i})$
\emph{(stage-$2$ IR);}
\item $u_{i}^{m}(t{}_{i}|\sigma_{i}(\theta_{i}),\sigma_{-i})\geq u_{i}^{m}(t'{}_{i},t_{i}|\sigma_{i}(\theta_{i}),\sigma_{-i})$
for each $t_{i},t'_{i}\in[a,b]$ in the support of $\sigma_{i}(\theta_{i})$
\emph{(stage-$2$ IC);}
\item $v_{i}^{m}(\theta_{i})\geq0$ \emph{(stage-$1$ IR);}
\item $v_{i}^{m}(\theta_{i})\geq v_{i}^{m}(\theta'_{i}|\theta_{i})$ for
each $\theta'_{i}\in\Theta_{i}$ \emph{(stage-$1$ IC).}
\end{enumerate}
The first two constraints are the usual individual rationality and
incentive compatibility constraints at the allocation stage. The third
constraint is the pre-allocation individual rationality constraint
implying that participation in the mechanism is voluntary. The fourth
constraint is the pre-allocation incentive compatibility constraint
incentivizing the bidders to tell the truth about their types (i.e.,
cost-scale and mean-value parameters).

We say that a two-stage mechanism $m=(p,x,\sigma,\tau)$ is feasible
if it satisfies the above four constraints. Let $\bar{M}$ denote
the set of feasible two-stage mechanisms. There are two separate objectives
that we focus on: revenue maximization and welfare maximization.

\paragraph*{Revenue maximization:}

Given $m=(p,x,\sigma,\tau)\in M$, let 
\begin{align*}
R(m) & =E_{\theta}[E_{\sigma(\theta)}\sum_{i\in N}[x_{i}(t,\sigma(\theta))+\tau_{i}(\theta_{i})]]\\
 & =\int_{\Theta}[\int_{T}[\sum_{i\in N}[x_{i}(t,\sigma(\theta))+\tau_{i}(\theta_{i})]]\,dF_{\sigma(\theta)}(t)\,]\,d(F_{r}\times F_{s})^{n}(\theta)
\end{align*}
 denote the total expected payment received by the seller from the
bidders if mechanism $(p,x)$ is run at stage-$2$ and profiles of
contracts $(\sigma,\tau)$ are used at stage-$1$, while the bidders
truthfully reveal their types. Let $R(m^{*})=\max_{m\in\bar{M}}R(m)$
denote the maximum level of expected revenue the seller can generate.

\paragraph*{Welfare maximization:}

Let $W(m)=\left[\sum_{i\in N}U_{i}(m)\right]+R(m)$ denote the total
expected welfare that can be achieved when mechanism$m=(p,x,\sigma,\tau)\in M$
is used. Let $W(m^{*})=\max_{m\in\bar{M}}W(m)$ denote the maximum
level of expected welfare that can be generated .

If a two-stage mechanism $m=(p,x,\sigma,\tau)$ maximizes the total
surplus $\sum_{i\in N}t_{i}p_{i}(t,\sigma(\theta))$ for each profile
of types $\theta\in\Theta$ and for each vector of values $t$ within
the support of $\sigma(\theta)$, we call it \emph{ex-post efficient};
when the mechanism $(p,x,\sigma,\tau)$ maximizes the expected (total)
net surplus 
\[
E_{\theta}[\sum_{i\in N}t_{i}p_{i}(t,\sigma(\theta))]-E_{\theta}[\sum_{i\in N}c_{\theta_{i}}(\sigma_{i}(\theta_{i}))],
\]
 we call it \emph{ex-ante efficient}. In other words, a mechanism
$m$ is ex-ante efficient if it maximizes the total expected welfare
(see also below).

\section{Analysis}

In this section, we study the optimal mechanisms that either maximize
the expected revenue or the expected welfare. We start our analysis
by first simplifying the objective functions within the class of feasible
two-stage mechanisms. Suppose that $m=(p,x,\sigma,\tau)$ is a given
feasible mechanism in $\bar{M}$. By definition, for each $\theta_{i}=(r_{i},s_{i})$
we have $v_{i}^{m}(\theta_{i})=u_{i}^{m}(\sigma_{i}(\theta_{i})|\sigma_{-i})-c_{\theta_{i}}(\sigma_{i}(\theta_{i}))-\tau_{i}(\theta_{i})$
where
\begin{align*}
u_{i}^{m}(\sigma_{i}(\theta_{i})|\sigma_{-i}) & =E_{\sigma_{i}(\theta_{i})}[E_{\theta_{-i}}[E_{\sigma_{-i}[\theta_{-i}]}[t_{i}p_{i}(t'_{i},t_{-i};\sigma_{i}(\theta_{i}),\sigma_{-i}(\theta_{-i}))\\
 & \;\;\;\;\;\;\;\;\;\;\;\;\;\;\;\;\;\;\;\;\;\;\;\;-x_{i}(t'_{i},t_{-i};\sigma_{i}(\theta_{i}),\sigma_{-i}(\theta_{-i}))]]].
\end{align*}
 Taking the expectation over $\Theta_{i}$, we obtain 
\[
E_{\theta_{i}}[v_{i}^{m}(\theta_{i})]=E_{\theta_{i}}[u_{i}^{m}(\sigma_{i}(\theta_{i})|\sigma_{-i})]-E_{\theta_{i}}[c_{\theta_{i}}(\sigma_{i}(\theta_{i}))]-E_{\theta_{i}}[\tau_{i}(\theta_{i})].
\]
Summing over bidders, we derive that 
\begin{align*}
E_{\theta}[\sum_{i\in N}v_{i}^{m}(\theta_{i})] & =E_{\theta}[\sum_{i\in N}t_{i}p_{i}(t,\sigma(\theta))-x_{i}(t,\sigma(\theta))]-E_{\theta}[\sum_{i\in N}c_{\theta_{i}}(\sigma_{i}(\theta_{i}))]-E_{\theta}[\sum_{i\in N}\tau_{i}(\theta_{i})].
\end{align*}
Rearranging the terms, we first obtain that the expected revenue of
the seller can be given as the difference between the expected net
surplus (i.e., surplus net of cost of information) and the expected
information rent: 
\begin{align*}
R(m) & =E_{\theta}[\sum_{i\in N}(x_{i}(t,\sigma(\theta))+\tau_{i}(\theta_{i}))]\\
 & =\underset{\text{expected net surplus}}{\underbrace{E_{\theta}[\sum_{i\in N}t_{i}p_{i}(t,\sigma(\theta))]-E_{\theta}[\sum_{i\in N}c_{\theta_{i}}(\sigma_{i}(\theta_{i}))]}}-\underset{\text{expected information rent}}{\underbrace{E_{\theta}[\sum_{i\in N}v_{i}^{m}(\theta_{i})]}.}
\end{align*}
We can further arrange the above equation to express the regulator's
welfare function as:
\begin{align*}
W(m) & =R(m)+E_{\theta}[\sum_{i\in N}v_{i}^{m}(\theta_{i})]\\
 & =\underset{\text{expected welfare}}{\underbrace{E_{\theta}[\sum_{i\in N}t_{i}p_{i}(t,\sigma(\theta))]-E_{\theta}[\sum_{i\in N}c_{\theta_{i}}(\sigma_{i}(\theta_{i}))]}}.
\end{align*}
The two expressions above clearly manifest the tension between revenue
and welfare maximization. As such, the seller's optimal mechanism
may distort the level of welfare that can be maximally obtained, and
similarly the efficient mechanism may cause the seller to leave additional
rent to the bidders lowering the revenue. Naturally we come across
similar efficiency-optimality conflict among many incentive compatible
mechanisms. For instance, in the setting of \citet{Myerson81} where
the distributions of values are fixed, the VCG mechanism $(p^{e},x^{e})$
delivers an efficient outcome, whereas the Myerson mechanism $(p^{r},x^{r})$
yields a revenue maximizing outcome. Since in our setting experiments
are endogenously determined, a conflict between revenue maximization
and welfare maximization, however, do not occur. In other words, the
seller's revenue maximizing mechanism will also be delivering the
maximum possible welfare. This means that the seller's optimal mechanism
would not only be maximizing the expected net surplus, but also would
be minimizing the expected information rent among the feasible two-stage
mechanisms.

\paragraph*{Efficient stage-$2$ rule:}

In the following, we will focus on two-stage mechanisms where at stage-$2$
the VCG mechanism, denoted by $(p^{e},x^{e})$, is used. The VCG mechanism
awards the good to the highest reported value $t_{i}$ (with a tie-break
rule) and the winner pays the highest reported value by the other
bidders. More formally, let $w(t)=\{i\in N:t_{i}\geq\max_{j\in N}t_{j}\}$
be the set of bidders who bid the highest value. Let 
\[
p_{i}^{e}(t)=\begin{cases}
1/|w(t)| & \text{if}\,i\in w(t),\\
0 & \text{if}\,i\notin w(t),
\end{cases}
\]
be the winning probability and let
\[
x_{i}^{e}(t)=\begin{cases}
\max_{j\neq i}t_{j} & \text{if}\,i\,\text{wins},\\
0 & \text{otherwise},
\end{cases}
\]
be the payment for all $i\in N$ and $t\in T$. That is, a bidder
wins only when she is among the highest bidders, and pays an amount
only when she wins and the payment is equal to the highest bid among
the opponents.

It is well-known that for any fixed profile of distributions $f=(f_{1},...,f_{n})$,
the VCG mechanism $(p^{e},x^{e})$ maximizes the total surplus among
the incentive compatible (IC) and individually rational (IR) allocation-payment
rules $(p,x)$; that is, for any stage-$2$ rule $(p,x)$ satisfying
the IC and IR constraints we have $\sum_{i\in N}t_{i}p_{i}^{e}(t,f)\geq\sum_{i\in N}t_{i}p_{i}(t,f).$
As such, for any given stage-$1$ contract $\sigma$, we have $E_{\theta}[\sum_{i\in N}t_{i}p_{i}^{e}(t,\sigma(\theta))]\geq E_{\theta}[\sum_{i\in N}t_{i}p_{i}(t,\sigma(\theta))]$
implying that 
\[
E_{\theta}[\sum_{i\in N}t_{i}p_{i}^{e}(t,\sigma(\theta))-c_{\theta_{i}}(\sigma_{i}(\theta_{i}))]\geq E_{\theta}[\sum_{i\in N}t_{i}p_{i}(t,\sigma(\theta))-c_{\theta_{i}}(\sigma_{i}(\theta_{i}))].
\]
But this means that a two-stage mechanism which uses the efficient
rule $(p^{e},x^{e})$ at stage-$2$ pointwise dominates (in terms
of the expected net surplus) a two-stage mechanism which uses another
rule $(p,x)$ at stage-$2$. In fact, the following result shows that
for any given feasible two-stage mechanism $m=(p,x,\sigma,\tau)\in\bar{M}$,
we can replace the stage-$2$ rule $(p,x)$ with the efficient rule
$(p^{e},x^{e})$ while we use the same stage-$1$ assignments $\sigma'=\sigma$
but adjust the fees $\tau$ to some $\tau'$ such that the new mechanism
is feasible and it weakly improves the revenue; that is, $(p^{e},x^{e},\sigma',\tau')\in\bar{M}$
and $R(p^{e},x^{e},\sigma',\tau')\geq R(p,x,\sigma,\tau)$.

Let $\bar{M}^{e}$ denote the subset of the set of feasible mechanisms
$\bar{M}$ such that the stage-$2$ rule is fixed as the efficient
second-price auction (i.e., the VCG mechanism).
\begin{lem}
\label{lem:vcg}For any $m=(p,x,\sigma,\tau)\in\bar{M}$, there exists
$m'=(p',x',\sigma',\tau')\in\bar{M}^{e}$ such that $R(m')\geq R(m)$.
\end{lem}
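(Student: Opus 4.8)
The plan is to leave the first stage untouched, set $\sigma'=\sigma$, replace the stage-$2$ rule by the VCG rule $(p^{e},x^{e})$, and choose the new fees $\tau'$ to exactly absorb the resulting change in each bidder's stage-$2$ interim payoff. Writing $u_i^{e}(f_i|\sigma_{-i})$ for the interim payoff computed under $(p^{e},x^{e})$ (the efficient analogue of $u_i^{m}$, evaluated at truthful stage-$2$ bidding), I would define, for every $i$ and every reported type $\theta_i$,
\[
\tau_i'(\theta_i)=\tau_i(\theta_i)+u_i^{e}(\sigma_i(\theta_i)|\sigma_{-i})-u_i^{m}(\sigma_i(\theta_i)|\sigma_{-i}).
\]
Because values lie in $[a,b]$ both interim payoffs are finite, so $\tau'$ is a well-defined (possibly signed) fee schedule and $m'=(p^{e},x^{e},\sigma,\tau')$ is a legitimate element of $M$.

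First I would dispatch the stage-$2$ constraints, which under VCG hold independently of the first stage: truthful bidding is a dominant strategy, giving stage-$2$ IC, and the ex-post surplus $t_i p_i^{e}(t)-x_i^{e}(t)=(t_i-\max_{j\neq i}t_j)^{+}\ge 0$ is nonnegative pointwise, hence so is its conditional expectation, giving stage-$2$ IR. The crux is then the two stage-$1$ constraints, and here the choice of $\tau'$ does all the work. Substituting the definition of $\tau'$ into the net payoff of reporting $\theta_i'$ when the true type is $\theta_i$, the added term cancels the VCG interim payoff and restores the original one, so that
\[
v_i^{m'}(\theta_i'|\theta_i)=u_i^{e}(\sigma_i(\theta_i')|\sigma_{-i})-c_{\theta_i}(\sigma_i(\theta_i'))-\tau_i'(\theta_i')=v_i^{m}(\theta_i'|\theta_i)
\]
for \emph{every} pair $(\theta_i',\theta_i)$. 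Thus the entire report-contingent net-payoff function is preserved; the family of stage-$1$ IC inequalities and the stage-$1$ IR inequalities for $m'$ are literally those for $m$, and since $m\in\bar{M}$ they hold. Hence $m'\in\bar{M}^{e}$, with $v_i^{m'}(\theta_i)=v_i^{m}(\theta_i)$ for all $\theta_i$.

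It then remains only to compare revenue, and I would invoke the decomposition $R=(\text{expected net surplus})-(\text{expected information rent})$ established above, which holds for every feasible mechanism and so applies to both $m$ and $m'$. The information-rent terms coincide because $v_i^{m'}\equiv v_i^{m}$, while, since the same experiment profile $\sigma(\theta)$ is registered in both mechanisms, the pointwise surplus-maximization of VCG gives $E_{\theta}[\sum_{i}t_i p_i^{e}(t,\sigma(\theta))]\ge E_{\theta}[\sum_{i}t_i p_i(t,\sigma(\theta))]$ and hence a weakly larger net-surplus term. Subtracting the equal rent terms yields $R(m')\ge R(m)$.

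The step that requires the most care is the stage-$1$ verification, and the reason it goes through so cleanly is worth isolating: under VCG the gross interim payoff $u_i^{e}(\sigma_i(\theta_i')|\sigma_{-i})$ depends on the report only through the registered experiment (and on the opponents' fixed profile $\sigma_{-i}$), carrying no residual dependence on the bidder's true type, so absorbing its difference from $u_i^{m}$ into the fee preserves \emph{all} off-path report comparisons rather than merely the on-path payoff. The only remaining bookkeeping is to confirm that the revenue decomposition was indeed derived for arbitrary feasible mechanisms---so that it is valid with the VCG payments---and that the offset does not disturb the measurability of $\tau'$, both of which are immediate.
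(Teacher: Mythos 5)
Your proposal is correct and follows essentially the same route as the paper: the fee adjustment $\tau_i'(\theta_i)=\tau_i(\theta_i)+u_i^{e}(\sigma_i(\theta_i)|\sigma_{-i})-u_i^{m}(\sigma_i(\theta_i)|\sigma_{-i})$ is exactly the paper's construction (the paper writes $u_i^{m'}$ for your $u_i^{e}$), the preservation of the full report-contingent payoff function $v_i^{m'}(\theta_i'|\theta_i)=v_i^{m}(\theta_i'|\theta_i)$ is how the paper verifies stage-$1$ IC and IR, and the revenue comparison via the surplus-minus-rent decomposition together with pointwise surplus maximization of VCG is the paper's final step. Your closing observation—that the argument works because the gross interim payoff depends on the report only through the registered experiment—is a nice articulation of why the cancellation preserves all off-path comparisons, but it is a clarification rather than a departure from the paper's proof.
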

\begin{proof}
Let $m=(p,x,\sigma,\tau)\in\bar{M}$ be a feasible mechanism. Let
$m'=(p',x',\sigma',\tau')\in M$ be an alternative mechanism such
that $(p',x')=(p^{e},x^{e})$ and $\sigma'=\sigma$, and for any bidder
$i$, $\tau'_{i}(\theta_{i})=\tau_{i}(\theta_{i})+u_{i}^{m'}(\sigma_{i}(\theta_{i})|\sigma_{-i})-u_{i}^{m}(\sigma_{i}(\theta_{i})|\sigma_{-i})$.
We want to show that (i) $(p',x',\sigma',\tau')\in\bar{M}$ and (ii)
$R(p',x',\sigma',\tau')\geq R(p,x,\sigma,\tau)$. As discussed before,
constraints $1$ and $2$ hold when the VCG mechanism $(p^{e},x^{e})$
is used at stage-$2$. Let $v_{i}^{m'}(\theta_{i})$ be the net expected
payoff of bidder $i$ when she truthfully reveals her type $\theta_{i}=(r_{i},s_{i})$
under the mechanism $m'$. By definition, we have
\begin{align*}
v_{i}^{m'}(\theta_{i}) & =u_{i}^{m'}(\sigma'_{i}(\theta_{i})|\sigma'_{-i})-c_{\theta_{i}}(\sigma'_{i}(\theta_{i}))-\tau'_{i}(\theta_{i})\\
 & =u_{i}^{m}(\sigma_{i}(\theta_{i})|\sigma_{-i})-c_{\theta_{i}}(\sigma_{i}(\theta_{i}))-\tau{}_{i}(\theta_{i})\\
 & =v_{i}^{m}(\theta_{i})\geq0
\end{align*}
 showing that constraint $3$ holds under $m'$. Now let $v_{i}^{m'}(\theta'_{i}|\theta_{i})$
denote the net expected payoff of bidder $i$ when she reveals her
type as $\theta'_{i}=(r'_{i},s'_{i})$ instead of telling her true
type $\theta_{i}=(r_{i},s_{i})$. By definition, we have
\begin{align*}
v_{i}^{m'}(\theta'_{i}|\theta_{i}) & =u_{i}^{m'}(\sigma'_{i}(\theta'_{i})|\sigma'_{-i})-c_{\theta_{i}}(\sigma'_{i}(\theta'_{i}))-\tau'_{i}(\theta'_{i})\\
 & =u_{i}^{m}(\sigma_{i}(\theta'_{i})|\sigma{}_{-i})-c_{\theta_{i}}(\sigma_{i}(\theta'_{i}))-\tau_{i}(\theta'_{i})\\
 & =v_{i}^{m}(\theta'_{i}|\theta_{i})\leq v_{i}^{m}(\theta_{i})=v_{i}^{m'}(\theta_{i})
\end{align*}

showing that constraint $4$ holds under $m'$. Thus, $m'\in\bar{M}$.
Moreover we have, 
\begin{align*}
R(m) & =E_{\theta}[\sum_{i\in N}t_{i}p_{i}(t,\sigma(\theta))]-E_{\theta}[\sum_{i\in N}c_{\theta_{i}}(\sigma_{i}(\theta_{i}))]-E_{\theta}[\sum_{i\in N}v_{i}^{m}(\theta_{i})]\\
 & =E_{\theta}[\sum_{i\in N}t_{i}p_{i}(t,\sigma'(\theta))]-E_{\theta}[\sum_{i\in N}c_{\theta_{i}}(\sigma'_{i}(\theta_{i}))]-E_{\theta}[\sum_{i\in N}v_{i}^{m'}(\theta_{i})]\\
 & \leq E_{\theta}[\sum_{i\in N}t_{i}p_{i}^{e}(t,\sigma'(\theta))]-E_{\theta}[\sum_{i\in N}c_{\theta_{i}}(\sigma'_{i}(\theta_{i}))]-E_{\theta}[\sum_{i\in N}v_{i}^{m'}(\theta_{i})]\\
 & =R(m').
\end{align*}
where the inequality follows from the fact that the VCG rule $(p^{e},x^{e})$
maximizes the surplus for any fixed profile of distributions.
\end{proof}
Lemma \ref{lem:vcg} shows that in searching for an optimal mechanism
maximizing the expected revenue, one can focus on the set of mechanisms
which employ the efficient rule at stage-$2$. In fact, the proof
of this result shows that from the given mechanism $m$ an alternative
mechanism $m'$ can be constructed such that each bidder's information
rent will remain the same; that is, $v_{i}^{m'}(\theta_{i})=v_{i}^{m}(\theta_{i})$
for all $i\in N$ and $\theta_{i}\in\Theta_{i}$. As such, we also
have
\begin{align*}
W(m) & =R(m)+E_{\theta}[\sum_{i\in N}v_{i}^{m}(\theta_{i})]\\
 & \leq R(m')+E_{\theta}[\sum_{i\in N}v_{i}^{m'}(\theta_{i})]=W(m')
\end{align*}
 showing that to maximize the expected welfare, one can focus on the
efficient mechanisms in $\bar{M}^{e}$. Thus we have the following
corollary.
\begin{cor}
\label{cor:efficient}For any $m=(p,x,\sigma,\tau)\in\bar{M}$, there
exists $m'=(p',x',\sigma',\tau')\in\bar{M^{e}}$ such that $W(m')\geq W(m)$.
\end{cor}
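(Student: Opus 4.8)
The plan is to reuse the very mechanism $m'$ produced in the proof of Lemma \ref{lem:vcg} and to exploit the welfare decomposition derived just before it. First I would recall that, for the $m'$ constructed there---namely $(p',x')=(p^{e},x^{e})$, $\sigma'=\sigma$, and $\tau'_{i}(\theta_{i})=\tau_{i}(\theta_{i})+u_{i}^{m'}(\sigma_{i}(\theta_{i})|\sigma_{-i})-u_{i}^{m}(\sigma_{i}(\theta_{i})|\sigma_{-i})$---the fee adjustment was chosen precisely so that each bidder's net expected payoff is left unchanged. The computation in the lemma yields $v_{i}^{m'}(\theta_{i})=v_{i}^{m}(\theta_{i})$ for every $i\in N$ and every $\theta_{i}\in\Theta_{i}$, and in particular establishes $m'\in\bar{M}^{e}$, so the same $m'$ already witnessed in the lemma is the candidate here.

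Next I would invoke the welfare identity $W(m)=R(m)+E_{\theta}[\sum_{i\in N}v_{i}^{m}(\theta_{i})]$, which holds for any $m\in M$. Applying it to both $m$ and $m'$ and using the rent-preservation equality term by term gives $E_{\theta}[\sum_{i\in N}v_{i}^{m'}(\theta_{i})]=E_{\theta}[\sum_{i\in N}v_{i}^{m}(\theta_{i})]$, so the two information-rent terms cancel. The conclusion then follows immediately from the revenue inequality $R(m')\geq R(m)$ supplied by Lemma \ref{lem:vcg}: chaining $W(m')=R(m')+E_{\theta}[\sum_{i\in N}v_{i}^{m'}(\theta_{i})]\geq R(m)+E_{\theta}[\sum_{i\in N}v_{i}^{m}(\theta_{i})]=W(m)$ closes the argument.

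The result is a direct corollary, so there is no genuine obstacle to overcome: all the substantive work lies in the construction and the rent-preservation property already verified in Lemma \ref{lem:vcg}. The only point worth stressing is structural---the fee translation $\tau'-\tau$ equals exactly the change in stage-$2$ surplus $u_{i}^{m'}-u_{i}^{m}$, which simultaneously keeps each bidder's rent fixed (so welfare reduces to revenue plus an unchanged rent term) and funnels the entire efficiency gain from switching to $(p^{e},x^{e})$ into the seller's revenue. Hence the same $m'$ that weakly raises revenue also weakly raises welfare, and one could equally phrase the proof as: since $W=R+\text{(rent)}$ and the rent is invariant under the construction, any weak improvement in $R$ is inherited verbatim by $W$.
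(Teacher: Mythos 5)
Your proof is correct and is essentially identical to the paper's own argument: the paper likewise reuses the mechanism $m'$ from Lemma \ref{lem:vcg}, notes that the fee adjustment preserves each bidder's rent ($v_{i}^{m'}(\theta_{i})=v_{i}^{m}(\theta_{i})$), and then applies the decomposition $W(m)=R(m)+E_{\theta}[\sum_{i\in N}v_{i}^{m}(\theta_{i})]$ together with $R(m')\geq R(m)$ to conclude $W(m')\geq W(m)$. No differences worth noting.
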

Corollary \ref{cor:efficient} impies that shows any feasible ex-ante
efficient two-stage mechanism must use the VCG rule at stage-$1$,
and, thus, by Lemma \ref{lem:vcg} any revenue maximizing two-stage
mechanism must be ex-ante efficient. There are related studies in
the literature which show, similar to the above corollary, that the
VCG rule maximizes the welfare whenever bidder participation or preference
determination is costly. For instance, in an IPV auction setting where
sending messages to the seller (i.e., participation in the auction)
is costly, \citet{Stegeman96} shows that the second-price auction
has an equilibrium that is ex-ante efficient. In another study with
an IPV auction setting where bidders can acquire (covert) information
before participating in the auction, \citet{BV02} show that the VCG
rule guarantees both ex-ante and ex-post efficiency. When we have
an efficient mechanism $(p,x,\sigma,\tau)\in\bar{M}^{e}$ and we want
to focus on the assignment rule $\sigma$, we also write $W^{e}(\sigma)$
to denote the expected welfare under the mechanism $(p,x,\sigma,\tau)$
before fees are applied.

\paragraph*{Potential game:}

Given that the optimal stage-$2$ rule should be the efficient rule
$(p^{e},x^{e})$, we are now interested in understanding the optimal
stage-$1$ assignments $\sigma$ and schedules $\tau$ which determine
an experiment $\sigma_{i}(\theta_{i})$ and a fee $\tau{}_{i}(\theta_{i})$
to each declared type $\theta_{i}\in\Theta_{i}$ by each bidder $i$.
Notice that stage-$1$ fees do not depend on the experiments registered
by the bidders. As such, the only remaining constraint for bidder
$i$'s choice of experiment $\sigma_{i}(\theta_{i})$ is that its
center $\mu(\sigma_{i}(\theta_{i}))$ must meet the declared mean
value $z_{\theta_{i2}}$; that is, $\sigma_{i}(\theta_{i})\in D(\theta_{i2})$.
Beyond this restriction the bidders are (and should be) free to choose
the best experiment they can obtain based on their types. The reason
for this fact is that we have a Bayesian exact potential game where
each bidder's switch of an experiment leads to the same amount of
change in the expected welfare and the expected payoff of the bidder.
To see this, let $m,m'\in\bar{M}^{e}$ be two given efficient mechanisms
with assignment and fee rules $(\sigma,\tau),(\sigma',\tau')$, respectively
such that $\tau=\tau'$ and $\sigma_{-i}=\sigma'_{-i}$, but $\sigma_{i}\neq\sigma_{i}'$
for some bidder $i\in N$. Note that under the efficient rule $(p^{e},x^{e})$,
for any given profile of bids $t\in T$, payoff of bidder $i$ is
\[
u_{i}^{e}(t)=t_{i}p_{i}^{e}(t)+\sum_{j\neq i}t_{j}p_{j}^{e}-\max_{j\neq i}t_{j}=\sum_{k\in N}t_{k}p_{k}^{e}-\max_{j\neq i}t_{j}.
\]
Therefore, the interim payoff of bidder $i$ when she bids $t_{i}$
is
\begin{align*}
u_{i}^{e}(t_{i}|\sigma_{i}(\theta_{i}),\sigma_{-i}) & =E_{\theta_{-i}}[E_{\sigma_{-i}[\theta_{-i}]}[t_{i}p_{i}^{e}(t|\sigma_{i}(\theta_{i}),\sigma_{-i})+\sum_{j\neq i}t_{j}p_{j}^{e}(t|\sigma_{i}(\theta_{i}),\sigma_{-i})-\max_{j\neq i}t_{j}]]\\
 & =E_{\theta_{-i}}[E_{\sigma_{-i}[\theta_{-i}]}[\sum_{k\in N}t_{k}p_{k}^{e}(t|\sigma_{i}(\theta_{i}),\sigma_{-i})-\max_{j\neq i}t_{j}]],
\end{align*}
given that all other bidders truthfully bid according to their assigned
experiments in $\sigma_{-i}(\theta_{-i})$ for each realization of
profile of types $\theta_{-i}$. Therefore, the interim payoff of
bidder $i$ when she follows her experiment $\sigma_{i}(\theta_{i})$
is equal to
\begin{align*}
u_{i}^{e}(\sigma_{i}(\theta_{i})|\sigma_{-i}) & =E_{\sigma_{i}(\theta_{i})}[E_{\theta_{-i}}[E_{\sigma_{-i}[\theta_{-i}]}[\sum_{k\in N}t_{k}p_{k}^{e}(t|\sigma_{i}(\theta_{i}),\sigma_{-i}(\theta_{-i}))]]]\\
 & \;\;\;\;\;\;\;\;\;\;\;\;-E_{\theta_{-i}}[E_{\sigma_{-i}[\theta_{-i}]}[\max_{j\neq i}t_{j}]].
\end{align*}
 But then when there is a change of assignment for bidder $i$ from
$\sigma_{i}$ to $\sigma'_{i}$, for each type realization $\theta_{i}\in\Theta_{i}$
we have the payoff difference as
\begin{align*}
u_{i}^{e}(\sigma'_{i}(\theta_{i})|\sigma_{-i})-u_{i}^{e}(\sigma_{i}(\theta_{i})|\sigma_{-i}) & =E_{\sigma'_{i}(\theta_{i})}[E_{\theta_{-i}}[E_{\sigma_{-i}[\theta_{-i}]}[\sum_{k\in N}t_{k}p_{k}^{e}(t|\sigma'_{i}(\theta_{i}),\sigma_{-i}(\theta_{-i}))]]]\\
 & \;\;\;-E_{\sigma_{i}(\theta_{i})}[E_{\theta_{-i}}[E_{\sigma_{-i}[\theta_{-i}]}[\sum_{k\in N}t_{k}p_{k}^{e}(t|\sigma_{i}(\theta_{i}),\sigma_{-i}(\theta_{-i}))]]].
\end{align*}
Taking an expectation over $\Theta_{i}$, we derive that the expected
payoff difference for bidder $i$ by switching from $\sigma_{i}$
to $\sigma'_{i}$ is equal to
\begin{align}
E_{\theta_{i}}[u_{i}^{e}(\sigma'_{i}(\theta_{i})|\sigma_{-i})]-E_{\theta_{i}}[u_{i}^{e}(\sigma_{i}(\theta_{i})|\sigma_{-i})] & =E_{\theta}[\sum_{k\in N}t_{k}p_{k}^{e}(t,\sigma'(\theta))]-E_{\theta}[\sum_{k\in N}t_{k}p_{k}^{e}(t,\sigma(\theta))].\nonumber \\
 & \;\;\;\;\;\;\label{eq:potential}
\end{align}
Let $\Pi_{i}^{e}(\sigma_{i},\sigma_{-i})=E_{\theta_{i}}[u_{i}^{e}(\sigma_{i}(\theta_{i})|\sigma_{-i})]-E_{\theta_{i}}[c_{\theta_{i}}(\sigma_{i}(\theta_{i}))]$
denote the expected net interim payoff of bidder $i$ before the fees
are applied. Whenever the fees are set in such a way that the stage-$1$
IC and IR constraints hold, then the bidder will be maximizing the
objective function $\Pi^{e}(\sigma_{i},\sigma_{-i})$ by choosing
an experiment $\sigma_{i}(\theta_{i})$ for each realization of her
type $\theta_{i}\in\Theta_{i}$. Notice that if we subtract $E_{\theta_{i}}[c_{\theta_{i}}(\sigma'_{i}(\theta_{i}))]-E_{\theta_{i}}[c_{\theta_{i}}(\sigma_{i}(\theta_{i}))]$
from the left hand side in the equation (\ref{eq:potential}), then
we obtain $\Pi_{i}^{e}(\sigma'_{i},\sigma_{-i})-\Pi_{i}^{e}(\sigma_{i},\sigma_{-i})$.

By definition, we have 
\[
W^{e}(\sigma_{i},\sigma_{-i})=E_{\theta}[\sum_{k\in N}t_{k}p_{k}^{e}(t,\sigma(\theta)]-E_{\theta}[\sum_{k\in N}c_{\theta_{k}}(\sigma_{k}(\theta_{k}))]
\]
 as the expected welfare. If we subtract $E_{\theta}[\sum_{k\in N}c_{\theta_{k}}(\sigma'_{k}(\theta_{k}))]-E_{\theta}[\sum_{k\in N}c_{\theta_{k}}(\sigma_{k}(\theta_{k}))]$
from the right hand side of the equation (\ref{eq:potential}), we
obtain $W^{e}(\sigma'_{i},\sigma_{-i})-W^{e}(\sigma_{i},\sigma_{-i})$.
But also note that since $\sigma_{-i}=\sigma'_{-i}$, when switching
from $\sigma_{i}$ to $\sigma'_{i}$ the expected cost difference
for bidder $i$ and the total expected cost difference are the same,
and so we have
\[
E_{\theta_{i}}[c_{\theta_{i}}(\sigma'_{i}(\theta_{i}))]-E_{\theta_{i}}[c_{\theta_{i}}(\sigma_{i}(\theta_{i}))]=E_{\theta}[\sum_{k\in N}c_{\theta_{k}}(\sigma'_{k}(\theta_{k}))]-E_{\theta}[\sum_{k\in N}c_{\theta_{k}}(\sigma_{k}(\theta_{k}))].
\]

This means, however, that the following equality holds
\[
\Pi_{i}^{e}(\sigma'_{i},\sigma_{-i})-\Pi_{i}^{e}(\sigma_{i},\sigma_{-i})=W^{e}(\sigma'_{i},\sigma_{-i})-W^{e}(\sigma_{i},\sigma_{-i})
\]
 which shows that we have a Bayesian exact potential game with the
potential function $W^{e}$; that is, the marginal change in the expected
welfare is the same as the marginal change in the payoff of bidder
$i$ when she switches from $\sigma_{i}$ to $\sigma'_{i}$. As such,
the game bidders are playing at stage-$1$ is a Bayesian exact potential
game, and so any maximizer $\sigma^{*}$ of the expected welfare function
$W^{e}(\sigma)$ must be a Bayesian Nash equilibrium of the game where
each bidder freely chooses an experiment $f_{i}\in D(s_{i})$ whenever
her type is some $\theta_{i}=(r_{i},s_{i})$.\footnote{In a related study, \citet{BP07} consider an IPV auction setting
and show that when the seller is in full control of the information
precision of each bidder, then the optimal information structures
are partitional and the partitions are asymmetric among the bidders.}

\paragraph*{Optimal assignments: }

We now show that the potential game we described above has a solution
$\sigma^{*}$. For any given type $\theta_{i}$, the set $D(\theta_{i2})$
is a non-empty, convex, closed subset of $\Delta([a,b])$. Recall
that the set $\Delta([a,b])$ is weakly compact, and so $D(\theta_{i2})$
is also compact. A pure (measurable) strategy for bidder $i$ is a
map $\sigma_{i}:\Theta_{i}\to\Delta([a,b])$ such that $\sigma_{i}(\theta_{i})\in D(\theta_{i2})$
for almost all $\theta_{i}\in\Theta_{i}$ with respect to the measure
$F_{r}\times F_{s}$. Let $A_{i}$ be the space of such maps. We equip
$A_{i}$ with the pointwise weak convergence topology since $A_{i}\subset\prod_{\theta_{i}}\Delta([a,b])$,
and therefore it inherits the product of weak topologies. Moreover,
$A_{i}$ is closed and since $\prod_{\theta_{i}}\Delta([a,b])$ is
compact by Tyschonoff's theorem, $A_{i}$ is compact. Now let $A=A_{1}\times...\times A_{n}$
be the joint strategy space with the product of the product-weak topologies.
$A$ is compact since it is the product of compact spaces.

Given $\sigma=(\sigma_{1},...,\sigma_{n})$, we have the Bayesian
exact potential as
\[
W^{e}(\sigma)=E_{\theta}[\sum_{k\in N}t_{k}p_{k}^{e}(t,\sigma(\theta))]-E_{\theta}[\sum_{k\in N}c_{\theta_{k}}(\sigma_{k}(\theta_{k}))].
\]
 We want to show that $W^{e}(\sigma)$ is upper semi-continuous (u.s.c).
Clearly, the first term above is continuous since it is a linear function
of $\sigma$. For each $\theta_{i}$, the cost function $c_{\theta_{i}}(.)$
is l.s.c. on $D(\theta_{i2})$. Thus the function $c_{\theta_{i}}(\sigma_{i}(\theta_{i}))$
is l.s.c. pointwise. Taking an expectation over $\Theta_{i}$ yields
that $E_{\theta_{i}}[c_{\theta_{i}}(\sigma_{i}(\theta_{i}))]$ is
l.s.c., and finally summing over the bidders gives us the fact that
$E_{\theta}[\sum_{k\in N}c_{\theta_{k}}(\sigma_{k}(\theta_{k}))]$
is a l.s.c. functional of $f_{\sigma}$. Thus, we derive that $W^{e}(\sigma)$
is an u.s.c functional of $\sigma$. Therefore, by the Weierstrass
theorem there exists at least one maximizer $\sigma^{*}$ such that
$W^{e}(\sigma^{*})=\max_{\sigma\in A}W^{e}(\sigma)$.

Furthermore, we can argue that there exists a symmetric solution such
that $\sigma_{i}^{*}=\sigma_{j}^{*}$ for all $i,j\in N$. Now let
$\pi$ be a permutation of the set $N$ and define by $\sigma_{\pi}^{*}=(\sigma_{\pi(1)}^{*},...,\sigma_{\pi(n)}^{*})$
the permuted profile. Since every bidder's type realization is i.i.d,
both the first and second terms in the welfare functional are symmetric
in the bidders, we have $W^{e}(\sigma^{*})=W^{e}(\sigma_{\pi}^{*})$
for all $\pi$. Thus, every permutation $\sigma_{\pi}^{*}$ of the
maximizer $\sigma^{*}$ is also a maximizer. Let $\bar{\sigma}=\frac{1}{n!}\sum_{\pi}\sigma_{\pi}^{*}$
be a symmetrized strategy profile. Since $A$ is convex, we have $\bar{\sigma}\in A$.
Recall that the first term of the welfare functional is linear in
$\sigma$, while the second term is convex in $\sigma$ since $c_{\theta_{i}}(.)$
is a convex function for each $i$ and $\theta_{i}$. But then the
welfare functional $W^{e}(\sigma)$ is concave in $\sigma$. Therefore,
by Jensen's inequality we have 
\[
W^{e}(\bar{\sigma})=W^{e}(\frac{1}{n!}\sum_{\pi}\sigma_{\pi}^{*})\geq\frac{1}{n!}\sum_{\pi}W^{e}(\sigma_{\pi}^{*})=W^{e}(\sigma^{*})
\]
 showing that the symmetrized strategy profile $\bar{\sigma}$ is
a maximizer of $W^{e}$. Is the solution unique? With our current
set of assumptions on the primitives, this is not necessarily the
case. However, if we assume that the cost function $c_{\theta_{i}}$
is strictly convex instead of only convex, then the potential function
becomes strictly concave. In that case, there exists a unique solution.
Moreover, due to the permutation argument given above, the unique
solution must be symmetric.

\paragraph*{Optimal fees: }

We now analyze how the optimal fees $\tau^{*}$ can be obtained. Recall
that we fixed $(p^{e},x^{e})$ as stage-$2$ rule and without loss
of generality the optimal assignment profile $\sigma^{*}$ at stage-$1$
can be taken symmetric such that $\sigma_{i}^{*}=\sigma_{j}^{*}$
for all bidders $i,j$. For each truly reported type $\theta_{i}=(r_{i},s_{i})$,
let $\Phi_{i}(r_{i},s_{i})=\max_{f_{i}\in D(s_{i})}\left(u_{i}^{e}(f_{i}|\sigma_{-i}^{*})-c_{r_{i},s_{i}}(f_{i})\right)$
denote the highest stage-$2$ utility obtained before the fees applied.
If instead the type is misreported as $\theta'_{i}=(r'_{i},s'_{i})$,
then her best (before fees are applied) utility is $\Phi_{i}(r'_{i},s'_{i}|r_{i},s_{i})=\max_{f_{i}\in D(s'_{i})}\left(u_{i}^{e}(f_{i}|\sigma_{-i}^{*})-c_{r_{i},s_{i}}(f_{i})\right)$.

Note that the misreported type $\theta'_{i}=(r'_{i},s'_{i})$ affects
pre-fee payoff only through $s'_{i}$; misreported $r'_{i}$ has no
effect. Fix some $s_{i}$ and let $\theta_{i}=(r_{i},s_{i})$ be the
true type and $(r'_{i},s_{i})$ be the reported type. Let $\tau_{i}:\Theta_{i}\to\mathbb{R}$
be a feasible fee scheme for each bidder $i$. As such, $\tau_{i}$
(together with stage $2$ rule $(p^{e},x^{e})$ and stage $1$ experiment
scheme $\sigma^{*}$) must satisfy stage-$1$ IR and IC constraints.
The bidder $i$'s net payoff under the fee schedule $\tau_{i}$ is
$\Phi_{i}(r_{i},s_{i})-\tau_{i}(r'_{i},s_{i})$. The stage-$2$ IC
condition in $r$ then implies that 
\[
\Phi_{i}(r_{i},s_{i})-\tau_{i}(r_{i},s_{i})\geq\Phi_{i}(r_{i},s_{i})-\tau_{i}(r'_{i},s_{i})
\]
 and so $\tau_{i}(r_{i},s_{i})\leq\tau_{i}(r'_{i},s_{i})$. But then
swapping the roles of $r_{i}$ and $r'_{i}$, we obtain that $\tau_{i}(r_{i},s_{i})=\tau_{i}(r'_{i},s_{i})$
for all $s_{i}$; that is, the optimal fee should be independent of
the reported parameter $r_{i}$. Therefore, let $\tau_{i}(s_{i}):[0,1]\to\mathbb{R}$
denote in short the feasible fee for each bidder $i$ and reported
$s_{i}$.

The stage-$1$ IR constraint requires that $\Phi_{i}(r_{i},s_{i})-\tau_{i}(s_{i})\geq0$
for each $r_{i}$ and $s_{i}$. Notice that $\Phi_{i}(\bar{r}_{i},s_{i})\leq\Phi_{i}(r_{i},s_{i})$
for all $s_{i}$ where $\bar{r}_{i}=1$. Thus, the fee schedule must
satisfy 
\begin{equation}
\tau_{i}(s_{i})\leq\Phi_{i}(\bar{r}_{i},s_{i})\label{eq:IR}
\end{equation}
 for each $s_{i}\in[0,1]$. Now consider the stage-$1$ IC constraint
in $s_{i}$. We have 
\[
\Phi_{i}(r_{i},s_{i})-\tau_{i}(s_{i})\geq\Phi_{i}(r{}_{i},s'_{i}|r_{i},s_{i})-\tau_{i}(s'_{i})
\]

for every $r_{i}$, $s_{i}$, and $s'_{i}$. Let $H_{i}(s_{i},s'_{i})=\inf_{r_{i}\in[0,1]}\left(\Phi_{i}(r_{i},s_{i})-\Phi_{i}(r{}_{i},s'_{i}|r_{i},s_{i})\right)$
denote the lowest pre-fee utility gap among the true types $(r_{i},s_{i})$
who report $(r_{i},s'_{i})$. Due to the above stage-$1$ IC constraint,
we must have
\begin{equation}
\tau_{i}(s_{i})\leq\tau_{i}(s'_{i})+H_{i}(s_{i},s'_{i})\label{eq:IC}
\end{equation}
for each $s_{i},s'_{i}\in[0,1]$. It is clear that the above two conditions
(equations \ref{eq:IR} and \ref{eq:IC}) characterize the set of
all feasible stage-$1$ fees (see, e.g., \citet{Rochet87}, Theorem
1).

Now we want to find out which feasible fee can maximize the expected
revenue for the seller. Let $\varOmega=\cup_{k\in\mathbb{N}}[0,1]^{k}$
denote the set of finite chains (e.g., $\omega=(\omega_{1},...,\omega_{k})$
for some $k\in\mathbb{N}$) generated from the set $[0,1]$. Let $\Omega(s)$
denote the set of finite chains which end with $s\in[0,1]$; that
is, if $\omega\in\Omega(s)$ with $\omega\in[0,1]^{k}$ for some $k\geq0$,
then $\omega_{k}=s$. Let $l(\omega)$ denote the length of a given
chain $\omega\in\Omega$; that is, $l(\omega)=k\in\mathbb{N}$ if
$\omega\in[0,1]^{k}$. Define the chain-closure fee for each bidder
$i$ as follows:
\[
\tau_{i}^{cc}(s_{i})=\inf_{\omega\in\Omega(s_{i})}\left(\Phi_{i}(\bar{r}_{i},\omega_{1})+\sum_{k=2}^{l(\omega)}H_{i}(\omega_{k},\omega_{k-1})\right)
\]
for any $s_{i}\in[0,1]$. We now show that the chain-closure fee is
feasible; that is, it satisfies stage-$1$ IR and IC constraints.
Since $H_{i}(\omega_{k},w_{k-1})\geq0$ for any $\omega\in\Omega$,
we have $\inf_{\omega\in\Omega}\left(\Phi_{i}(\bar{r}_{i},\omega_{1})+\sum_{k=2}^{l(\omega)}H_{i}(\omega_{k},\omega_{k-1})\right)\leq\Phi_{i}(\bar{r}_{i},\omega_{1})$,
and so $\tau^{cc}(s_{i})\leq\Phi_{i}(\bar{r}_{i},s_{i})$ for any
given $s_{i}\in[0,1]$ showing that stage-$1$ IR holds. Now fix some
$s_{i},s'_{i}\in[0,1]$. Since $\tau^{cc}(s'_{i})$ is the infimum,
for any $\epsilon>0$ we can find a chain $\omega\in\Omega(s'_{i})$
such that $\Phi_{i}(\bar{r}_{i},\omega_{1})+\sum_{k=2}^{l(\omega)}H_{i}(\omega_{k},\omega_{k-1})\leq\tau^{cc}(s'_{i})+\epsilon$.
But then if we add $H_{i}(s_{i},s'_{i})$ to both hand side of the
inequality we derive
\[
\Phi_{i}(\bar{r}_{i},\omega_{1})+\sum_{k=2}^{l(\omega)}H_{i}(\omega_{k},\omega_{k-1})+H_{i}(s_{i},s'_{i})\leq\tau^{cc}(s'_{i})+H_{i}(s_{i},s'_{i})+\epsilon.
\]
By definition, the left hand side of the above inequality is bigger
than $\tau^{cc}(s_{i})$, and so we have $\tau_{i}^{cc}(s_{i})\leq\tau_{i}^{cc}(s'_{i})+H_{i}(s_{i},s'_{i})+\epsilon$.
Since $\epsilon$ was arbitrary, we can let it go to $0$, and obtain
in the limit that $\tau_{i}^{cc}(s_{i})\leq\tau_{i}^{cc}(s'_{i})+H_{i}(s_{i},s'_{i})$
showing that stage-$1$ IC condition holds. Thus, the chain-closure
fee $\tau_{i}^{cc}$ is feasible.

Let $\omega\in\Omega(s_{i})$ be a given finite chain for some $s_{i}\in[0,1]$.
Since for any feasible fee $\tau_{i}$ stage-$1$ IR and IC conditions
hold, we have $\tau_{i}(\omega_{1})\leq\Phi_{i}(\bar{r}_{i},\omega_{1})$
and $\tau_{i}(\omega_{k})\leq\tau_{i}(\omega_{k-1})+H_{i}(\omega_{k},\omega_{k-1})$
for any $k=2,...,l(\omega)$. By iterating and summing up the terms,
we obtain that $\tau_{i}(s_{i})=\tau_{i}(\omega_{k})\leq\Phi_{i}(\bar{r}_{i},\omega_{1})+\sum_{k=2}^{l(\omega)}H_{i}(\omega_{k},\omega_{k-1})$
showing that $\tau_{i}(s_{i})$ is a lower-bound for the set $T(s_{i})=\{\Phi_{i}(\bar{r}_{i},\omega_{1})+\sum_{k=2}^{l(\omega)}H_{i}(\omega_{k},\omega_{k-1}):\omega\in\Omega(s_{i})\}$.
Since, by definition, $\tau_{i}^{cc}(s_{i})$ is the infimum of the
set $T(s_{i})$, we conclude that $\tau_{i}(s_{i})\leq\tau_{i}^{cc}(s_{i})$
and so the chain-closure fee yields the maximum amount of fee for
each $s_{i}$ given that stage-$2$ allocation rule and stage-$1$
experiment rule are fixed as $(p^{e},x^{e})$ and $\bar{\sigma}$,
respectively.

\paragraph*{Optimal two-stage mechanism: }

To sum up, the optimal two-stage mechanism $(p,x,\sigma,\tau)$ in
$\bar{M}$ is obtained by letting stage-$2$ rule to be efficient
rule $(p^{e},x^{e}),$ while letting stage-$1$ assignment rule $\sigma$
to be a symmetric maximizer $\bar{\sigma}$ of the potential function
$W^{e}(\sigma)$, and letting stage-$1$ fees $\tau$ to be determined
by the chain-closure fees $\tau^{cc}$.
\begin{thm}
\label{thm:opt}The two-stage mechanism $m^{*}=(p^{e},x^{e},\bar{\sigma},\tau^{cc})$
maximizes both the expected revenue $R(m)$ and the expected welfare
$W(m)$ among the feasible mechanisms in $\bar{M}$. 
\end{thm}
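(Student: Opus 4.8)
The plan is to assemble Theorem~\ref{thm:opt} from the three building blocks developed in the preceding analysis, showing that the proposed mechanism $m^{*}=(p^{e},x^{e},\bar{\sigma},\tau^{cc})$ simultaneously attains the revenue optimum and the welfare optimum. First I would establish that $m^{*}$ is feasible, i.e. $m^{*}\in\bar{M}^{e}\subset\bar{M}$: stage-$2$ IR and IC (constraints $1$ and $2$) hold automatically because $(p^{e},x^{e})$ is the VCG rule, while stage-$1$ IR and IC (constraints $3$ and $4$) hold because the chain-closure fee $\tau^{cc}$ was shown to satisfy exactly the inequalities \eqref{eq:IR} and \eqref{eq:IC} characterizing feasible fees. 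The experiment assignment $\bar{\sigma}$ satisfies the centering constraint $\bar{\sigma}_i(\theta_i)\in D(\theta_{i2})$ by construction.

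Next I would argue optimality by reducing both objectives to the two quantities the analysis already controls: the expected welfare $W^{e}(\sigma)$ and the expected fee revenue. By Lemma~\ref{lem:vcg} and Corollary~\ref{cor:efficient}, every revenue- or welfare-maximizing mechanism can be taken in $\bar{M}^{e}$, so it suffices to optimize over $\bar{M}^{e}$. For welfare, recall from the decomposition that $W(m)=W^{e}(\sigma)$ for any $m\in\bar{M}^e$ (the fees cancel), and the potential-game analysis produced a maximizer $\bar{\sigma}$ of $W^{e}$ over the compact strategy space $A$ via the Weierstrass theorem; hence $W(m^{*})=W^{e}(\bar{\sigma})=\max_{\sigma\in A}W^{e}(\sigma)\geq W(m)$ for all feasible $m$. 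For revenue, I would use $R(m)=W^{e}(\sigma)-E_{\theta}[\sum_i v_i^m(\theta_i)]$, noting that the information rent of a feasible mechanism equals the total payoff net of fees, so maximizing revenue means simultaneously maximizing net surplus and minimizing the rent left to bidders. Since $\bar{\sigma}$ maximizes net surplus and, bidder by bidder, $\tau^{cc}$ was shown to be the pointwise-largest feasible fee (so it extracts the maximal rent consistent with stage-$1$ IR and IC), the pair $(\bar{\sigma},\tau^{cc})$ minimizes the rent term while maximizing the surplus term, giving $R(m^{*})\geq R(m)$ for every $m\in\bar{M}^{e}$, hence for every $m\in\bar{M}$.

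The one point requiring care—the step I expect to be the main obstacle—is the interaction between the choice of $\bar{\sigma}$ and the fee optimization. The fee analysis fixed the assignment at $\bar{\sigma}$ and showed $\tau^{cc}$ is optimal \emph{given} that assignment, while the welfare argument chose $\bar{\sigma}$ to maximize net surplus independently of fees. I would justify that these two optimizations compose correctly by invoking the potential-game property: since $W^{e}$ is the exact potential, any maximizer of the potential is a Bayesian Nash equilibrium of the bidders' experiment-choice game, so once the fees are set to be independent of the registered experiment (as both $\tau^{cc}$ and the feasibility characterization force, the fee depending only on the reported $s_i$ and not on $r_i$ or the actual experiment), bidders voluntarily select $\bar{\sigma}$, and the quantities $\Phi_i$ and $H_i$ underlying $\tau^{cc}$ are precisely those generated by $\sigma_{-i}^{*}=\bar{\sigma}_{-i}$. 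Thus the surplus-maximizing assignment and the rent-extracting fee are mutually consistent rather than competing, which is exactly the non-conflict between revenue and welfare highlighted earlier. Collecting the feasibility claim with the two optimality inequalities then yields that $m^{*}$ maximizes both $R(m)$ and $W(m)$ over $\bar{M}$, completing the proof.
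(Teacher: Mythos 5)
Your assembly matches what the paper intends---the paper itself omits the proof, stating that it follows from Lemma~\ref{lem:vcg}, Corollary~\ref{cor:efficient}, and the preceding discussion---and the welfare half of your argument is sound: $m^{*}$ is feasible, $W(m)=W^{e}(\sigma)$ for any $m\in\bar{M}^{e}$, Corollary~\ref{cor:efficient} reduces the comparison to $\bar{M}^{e}$, and $\bar{\sigma}$ maximizes $W^{e}$ over $A$, so $W(m^{*})\geq W(m)$ for every feasible $m$.

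The revenue half, however, contains a genuine gap---precisely the ``main obstacle'' you flagged, which your potential-game composition argument does not actually close. What the chain-closure analysis establishes is that $\tau^{cc}$ is the largest fee that is feasible \emph{given} the assignment $\bar{\sigma}$ (equivalently, given that a bidder reporting $s_{i}$ may run any experiment in $D(s_{i})$). Your claim that $(\bar{\sigma},\tau^{cc})$ ``minimizes the rent term while maximizing the surplus term'' over $\bar{M}^{e}$ is false as stated: rent minimization over all feasible mechanisms is \emph{not} achieved by $m^{*}$. For instance, the mechanism that assigns every type the degenerate experiment $\delta_{z_{s_{i}}}$ and charges the fee $\tau_{i}(s_{i})=u_{i}^{e}(\delta_{z_{s_{i}}}\mid\sigma_{-i})$ is feasible (a misreport earns $-c_{\theta_{i}}(\delta_{z_{s'_{i}}})\leq0$, so stage-$1$ IC and IR hold) and leaves \emph{zero} rent, whereas $m^{*}$ generically leaves each type with $r_{i}<1$ the positive rent $\Phi_{i}(r_{i},s_{i})-\tau_{i}^{cc}(s_{i})\geq\Phi_{i}(r_{i},s_{i})-\Phi_{i}(1,s_{i})$. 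Hence surplus maximization and rent minimization are attained by \emph{different} mechanisms, and concluding $R(m^{*})\geq R(m)$ requires showing that for every feasible $(\sigma,\tau)$ the welfare shortfall $W^{e}(\bar{\sigma})-W^{e}(\sigma)$ is at least as large as the rent saving---a cross-assignment comparison that neither your proposal nor the paper's discussion provides. The potential-game property yields only internal consistency of $m^{*}$ (the assignment is an equilibrium given experiment-independent fees, and $\tau^{cc}$ is optimal given that assignment); it says nothing about feasible mechanisms that restrict or distort the experiment menu. Worse, standard Myersonian logic suggests the missing comparison can go the wrong way: starting from the welfare-maximizing menu, shrinking the set of allowed experiments causes only a second-order welfare loss (envelope argument at the maximizer of $W^{e}$) but a first-order reduction in the information rents of low-$r_{i}$ types, so a revenue-maximizing seller generically gains by distorting information acquisition below the efficient level. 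Closing this step therefore needs either a substantively new argument or additional assumptions; it does not follow from the pieces you (or the paper) cite.
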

The proof of Theorem \ref{thm:opt} follows from above discussions
(including Lemma \ref{lem:vcg} and Corollary \ref{cor:efficient}),
therefore it is omitted. We see that $(p^{e},x^{e},\bar{\sigma},\tau^{cc})$
is the optimal two-stage mechanism where at stage-$2$ the efficient
outcome is achieved, while at stage-$1$ the seller lets the bidders
experiment as they wish, but charges them a fee according to the chain-closure
scheme and generates the maximum level of expected revenue, while
leaving a minimum level of expected information rent to the bidders.
From the welfare maximization perspective this outcome is optimal
since the efficient rule already ensures that the maximum level of
welfare is achieved.

\section{Extensions}

We now consider some possible extensions of the mechanism design problem
we have considered above. Firstly, we relax the assumption that the
the seller can verify for free the experiment used $f_{i}$ or its
center $\mu(f_{i})$ whenever the bidder $i$ reports $s_{i}$. We
instead assume that the seller pays a cost for auditing each bidder's
choice of experiment. We identify the effects of this change on the
equilibrium outcomes. In an extreme case, the seller cannot verify
any statistics about the experiment (no auditing); in that situation,
we observe that the optimal transfer is a flat fee. Secondly, we allow
for the possibility that the principal can also verify the cost parameter
$r\in[0,1]$ of each bidder with costly auditing. We identify the
effects of this change on the optimal fee schedule.

\paragraph*{Auditing bidder experiments:}

Suppose that the seller cannot verify for free the experiment undertaken
by a bidder. Instead, the seller must pay a cost for auditing the
experiments. Let $k_{e}:[0,1]\to\mathbb{R}_{+}$ denote the cost of
auditing such that when the seller incurs the cost $k_{e}(q)\geq0$,
then with probability $q\in[0,1]$ the seller can perfectly observe
the conducted experiment, and with probability $1-q$ he does not
observe the experiment at all. In case a misreport by bidder $i$
is detected, then bidder $i$ is punished by at most an amount of
$P_{i}\leq0$.

As before, the optimal two-stage mechanism with auditing will be used
$(p^{e},x^{e},\bar{\sigma},\tau^{cc})$ as part of the revenue maximization
process. Moreover, the seller needs to determine an auditing rule
where for each $s_{i}\in[0,1]$ reported by bidder $i$, the seller's
success probability of catching a non-compliance will be $q(s_{i})\in[0,1]$.
For each reported type $\theta_{i}=(r_{i},s_{i})$, let 
\[
\Psi_{i}(r_{i},s_{i})=\max_{f_{i}\in D\setminus D(s_{i})}\left(u_{i}^{e}(f_{i}|\sigma_{-i}^{*})-c_{r_{i},s_{i}}(f_{i})\right)
\]
 denote the best non-compliant pre-fee utility of bidder $i$. If
bidder $i$ reports $\theta_{i}=(r_{i},s_{i})$ truthfully and complies,
she obtains a payoff of $\Phi_{i}(r_{i},s_{i})-\tau_{i}^{cc}(s_{i})$.
If, on the other hand, she deviates the best she can expect to receive
will be $(1-q(s_{i}))\,\Psi_{i}(r_{i},s_{i})\,+\,q(s_{i})\,P_{i}-\tau_{i}^{cc}(s_{i})$.
Therefore, truth-telling is enforced if $\Phi_{i}(r_{i},s_{i})\geq(1-q(s_{i}))\,\Psi_{i}(r_{i},s_{i})\,+\,q(s_{i})\,P_{i}$
for all $r_{i}$.

Rearranging the above terms, we have the lower-bound condition on
the deterrence probability $q(s_{i})\geq\frac{\Psi_{i}(r_{i},s_{i})-\Phi_{i}(r_{i},s_{i})}{\Psi_{i}(r_{i},s_{i})-P_{i}}\in[0,1]$.
Thus, the minimal auditing probability when $s_{i}$ is reported (regardless
of $r_{i}$) is 
\[
q_{P_{i}}^{*}(s_{i})=\sup_{r_{i}\in[0,1]}\left[\frac{\Psi_{i}(r_{i},s_{i})-\Phi_{i}(r_{i},s_{i})}{\Psi_{i}(r_{i},s_{i})-P_{i}}\right]^{+}
\]
where $[.]^{+}$refers to the $\max\{0,.\}$ operator. Note that the
higher $|P_{i}|$ is, the lower $q_{P_{i}}^{*}(s_{i})$ is for each
$s_{i}$, and so the lower auditing cost $k_{e}(q_{P_{i}}^{*}(s_{i}))$
will be. Suppose that the auditing costs are low enough that it is
optimal for the seller to audit. In that case, if $P=(P_{1},...,P_{n})$
is the profile of maximum punishments that can be imposed on bidders,
then in equilibrium the maximum expected revenue under auditing experiments
will be $E_{\theta}[\sum_{i\in N}(x_{i}^{e}(t,\sigma^{*}(\theta))+\tau_{i}^{cc}(\theta_{i2}))]-E_{\theta}[\sum_{i\in N}k_{e}(q_{P_{i}}^{*}(\theta_{i2}))]$.

\paragraph*{Auditing bidder costs:}

Suppose that the seller can verify for free the experiments. Moreover,
suppose that the seller can audit bidder costs. Specifically, let
$k_{r}:[0,1]\to\mathbb{R}_{+}$ denote the cost of auditing such that
when the seller incurs the cost $k_{r}(q)\geq0$, then with probability
$q\in[0,1]$ the seller can perfectly observe bidder $i$'s cost parameter
$r_{i}$, and with probability $1-q$ he does not observe the cost
parameter. In case a misreport is detected, each bidder $i$ is punished
by at most the amount $P_{i}\leq0$.

Let $H_{i}(s_{i},s_{i}'|r_{i})=\Phi_{i}(r_{i},s_{i})-\Phi_{i}(r_{i},s'_{i}|r_{i},s_{i})$
denote the pre-fee utility gap when a true type $(r_{i},s_{i})$ reports
$(r_{i},s'_{i})$. Define the chain-closure fee for each bidder $i$
as follows:
\[
\tau_{i}^{cc}(r_{i},s_{i})=\inf_{\omega\in\Omega(s_{i})}\left(\Phi_{i}(r_{i},\omega_{1})+\sum_{k=2}^{l(\omega)}H_{i}(\omega_{k},\omega_{k-1})\right)
\]
for any $\theta_{i}=(r_{i},s_{i})\in[0,1]^{2}$. For each fixed $r_{i}$,
the fee scheme $\tau_{i}^{cc}(r_{i},.)$ satisfies stage-$1$ IR and
IC constraints, and maximizes the amount of fee that can be collected
among the feasible schemes. Suppose that bidder $i$ with true type
$(r_{i},s_{i})$ contemplates reporting $(r'_{i},s_{i})$. Then, the
auditing deterrence requires 
\[
\Phi_{i}(r_{i},s_{i})-\tau_{i}^{cc}(r_{i},s_{i})\geq(1-q(r'_{i}))[\Phi_{i}(r_{i},s_{i})-\tau_{i}^{cc}(r'_{i},s_{i})]+q(r'_{i})P_{i}.
\]
 Rearranging the terms gives us the inequality that 
\[
q(r'_{i})\geq\frac{\tau_{i}^{cc}(r_{i},s_{i})-\tau_{i}^{cc}(r'_{i},s_{i})}{\Phi_{i}(r_{i},s_{i})-\tau_{i}^{cc}(r'_{i},s_{i})-P_{i}}.
\]
If $\tau_{i}^{cc}(r'_{i},s_{i})\geq\tau_{i}^{cc}(r_{i},s_{i})$, then
the inequality trivially holds; the fees already establish the deterrence,
and so there is no need for costly auditing. If $\tau_{i}^{cc}(r'_{i},s_{i})<\tau_{i}^{cc}(r_{i},s_{i})$,
then the auditing probability should be at least equal to the right
hand side. Choosing the minimum detection probability deters all possible
misreports for every $s$:
\[
q_{P_{i}}^{*}(r'_{i})=\sup_{r_{i},s_{i}:\tau_{i}^{cc}(r'_{i},s_{i})\leq\tau_{i}^{cc}(r_{i},s_{i})}\left[\frac{\tau_{i}^{cc}(r_{i},s_{i})-\tau_{i}^{cc}(r'_{i},s_{i})}{\Phi_{i}(r_{i},s_{i})-\tau_{i}^{cc}(r'_{i},s_{i})-P_{i}}\right]^{+}.
\]
By stage-$1$ IR and $P_{i}\leq0$, we have $q_{P_{i}}^{*}(r'_{i})\in[0,1]$.
Once again the higher $|P_{i}|$ is, the lower $q_{P_{i}}^{*}(r_{i})$
is for each $r_{i}$, and so the lower auditing cost $k_{r}(q_{P_{i}}^{*}(r_{i}))$
will be. Suppose that the auditing costs are low enough that it is
optimal for the seller to audit. In that case, if $P=(P_{1},...,P_{n})$
is the profile of maximum punishments that can be imposed on bidders,
then in equilibrium the maximum expected revenue under auditing will
be $E_{\theta}[\sum_{i\in N}(x_{i}^{e}(t,\sigma^{*}(\theta))+\tau_{i}^{cc}(\theta_{i}))]-E_{\theta}[\sum_{i\in N}k_{r}(q_{P_{i}}^{*}(\theta_{i1}))]$.

\paragraph*{No auditing:}

Suppose that the seller cannot audit bidder experiments or costs.
In this case, the seller cannot enforce stage-$1$ IC constraint in
the cost-scale parameter $r$ or mean-value parameter $s$. In other
words, any $r$ or $s$ dependent stage-$1$ fee is not incentive
compatible, and so stage-$1$ fees should be flat in both $r$ and
$s$. Since we still have a potential game, the optimal assignments
of experiments can be found by considering the maximizer of the potential
function in $D^{n}$:
\[
\max_{\sigma\in D^{n}}W^{e}(\sigma)=\max_{\sigma\in D^{n}}\left(E_{\theta}[\sum_{k\in N}t_{k}p_{k}^{e}(t,\sigma(\theta))]-E_{\theta}[\sum_{k\in N}c_{\theta_{k}}(\sigma_{k}(\theta_{k}))]\right).
\]
Then the total expected payment at stage-$2$ will be $x^{na}=E_{\theta}[\sum_{i\in N}(x_{i}^{e}(t,\sigma^{na}(\theta))]$
where $\sigma^{na}\in D^{n}$ is the maximizer of the potential function
when there is no audit, and so the bidders can choose any experiment
without any center restriction. In this case, the highest flat fee
that the seller can enforce $\tau_{i}^{na}$ should satisfy stage-$1$
IR constraint. As such, $\tau_{i}^{na}$ should satisfy $\tau_{i}^{na}=\essinf_{(r_{i},s_{i})}\Phi_{i}^{na}(r_{i},s_{i})$
where $\Phi_{i}^{na}(r_{i},s_{i})=\max_{f_{i}\in D}\left(u_{i}^{e}(f_{i}|\sigma_{-i}^{na})-c_{r_{i},s_{i}}(f_{i})\right)$.
Thus, the total expected revenue for the seller will be $x^{na}+\sum_{i\in N}\tau_{i}^{na}$.

\section{Discussion}

Our two-stage mechanism---VCG at stage-$2$ combined with a stage-$1$
experiment and fee schedules (where fees are constructed via the shortest-path
(i.e., the chain-closure) solution to the $s$-IC/IR envelope)---differs
from classic optimal auction models in two intertwined ways: (i) it
preserves efficiency ex-post (VCG) while extracting rents ex-ante
through a fee that internalizes information choices, and (ii) it treats
the stage-$1$ information-acquisition game as a Bayesian exact potential
game, so the equilibrium experiment profile maximizes expected welfare.
Because bidders are ex-ante symmetric, the stage-$1$ experiment assignment
rule and the fee schedule can be taken symmetric across bidders without
loss, which greatly simplifies both characterization and implementation.

Relative to \citet{Myerson81} and \citet{MW82}, who characterize
revenue-optimal allocation/pricing with fixed information, our two-stage
mechanism keeps the allocation rule efficient (VCG) rather than ironing
virtual values. The rationale is that with endogenous signals the
informational externalities run through stage-$2$ payoffs; VCG maximizes
the potential (expected surplus minus information costs), so any distortion
of stage-$2$ would reduce the potential and, by the standard revenue
decomposition, reduce revenue. In contrast, \citet{Stegeman96} and
\citet*{CSZ04} introduce participation/inspection fees and show how
entry or inspection can be screened; our fee, however, is not a uniform
entry or inspection fee but the shortest-path envelope over pairwise
$s$-IC constraints at a given $r$, which achieves maximal rent extraction
subject to truthful centers and stage-$1$ IR.

Within the auction design with endogenous information literature,
mechanisms typically either (a) restrict or recommend information
structures, or (b) embed precision choices into the objective. Our
approach is closest in spirit to \citet{BV02}\textquoteright s efficiency
with costly information and to \citet{BP07}\textquoteright s focus
on information structures, but it leans on two design commitments:
(1) stage-$2$ remain VCG precisely because the stage-$1$ signal
game is a Bayesian exact potential game, so welfare-maximization at
stage-$2$ aligns with equilibrium information choices; and (2) rent
extraction occurs entirely via a stage-$1$ fee, pinned down by the
chain-closure construction that aggregates all local $s$-deviation
bounds into a global maximum-revenue, $s$-IC-and-IR-feasible tariff.
Compared to \citet{Shi12}) and \citet*{GMSZ21}, which allow allocation
distortions to influence information incentives, our two-stage mechanism
isolates information provision from allocation by keeping VCG at stage-$2$
and using the fee to implement the efficient information profile while
minimizing information rents.

There are other related work which treat information as a design primitive
but differ on who controls it, what is optimized, and whether allocation
is distorted. \citet{Mensch22} studies a monopoly screening problem
with limited attention: buyers choose attention and the seller uses
menus (attributes/prices) to separate types, accepting distortions
to extract surplus. \citet{ES07} analyze competitive auctions where
the seller designs public disclosure/handicaps to raise revenue, often
tilting competition and sacrificing efficiency when profitable. By
contrast, our IPV auction keeps allocation efficient by committing
to VCG at stage-$2$ and prices bidder-chosen information ex ante
via a stage-$1$ experiment/fee menu (built from an s-IC/IR shortest-path
envelope). This makes the pre-auction learning game an exact potential
game, implements efficient information acquisition, and lets the seller
achieve optimal revenue without allocation distortions.

Finally, \citet{Persico00} studies signal precision choice at a cost
in first- and second-price auctions within a model of affiliated values;
\citet{PG25} take the second-price allocation rule as given and characterize
how competition shapes how much and what type of information bidders
seek (e.g., learning about own values or rival values). By contrast,
our objective is to find an optimal mechanism which maximizes the
revenue and/or welfare. Our two-stage mechanism designs the information
stage (and fees) so that competition remains effective and revenue-optimal
under IPV with endogenous learning. The symmetry of the environment
implies a symmetric fee/experiment mapping across bidders, and the
potential-game structure implies that the induced information profile
is socially optimal among feasible signals. In short, our two-stage
mechanism sits at the intersection of efficient allocation (VCG),
optimal information provision (through the potential), and maximal
rent extraction (via chain-closure), offering a clean benchmark against
which the classic revenue-maximizing distortions or inspection-fee
designs can be contrasted.

\section{Conclusion}

This paper revisits single-object auctions under independent private
values when bidders can flexibly---but at a cost---improve what
they know about their own valuations before bidding. We contrasted
the welfare benchmark, which implements efficient allocation in VCG/second-price
environments, with the revenue benchmark characterized by virtual
values and reserves. We then asked how endogenous information acquisition
reshapes this classic dichotomy. Our main result shows that, once
the seller commits to a two-stage design in which experiments are
registered up front and the stage-$2$ allocation rule is VCG, the
incentives to learn line up with efficiency, and the seller can implement
a transfer scheme that is incentive compatible and individually rational
while also attaining the Myerson revenue objective. Intuitively, fixing
VCG at stage-$2$ turns the pre-auction information game into an exact
potential game, so equilibrium learning maximizes expected welfare;
the stage-$1$ fee schedule then extracts the value generated by better
selection without needing to condition on unverifiable cost scales,
relying instead on verifiable features of the registered experiments.
This delivers a unification: in our environment, the divergence between
revenue and welfare that arises from exclusion, asymmetries, and reserves
in static models can be neutralized when learning is properly internalized
at stage-$1$.

These findings carry practical and theoretical implications. Practically,
they suggest that when pre-auction diligence is salient---as in spectrum,
mineral rights, or complex procurement---designers should evaluate
formats not only by their static allocation rules but by the learning
incentives they embed; a simple commitment to VCG in stage-$2$ coupled
with a transparent, experiment-based registration fee can simultaneously
deliver high revenue and efficiency. Theoretically, our results refine
the scope of revenue equivalence: formats that are revenue-equivalent
in the no-learning benchmark need not remain so once learning is endogenous,
but an appropriate two-stage commitment can restore alignment by making
the induced allocation rule format-invariant through efficient information
acquisition. Several extensions merit further work: robustness to
richer frictions in verifying experiments; budget or liquidity constraints;
risk aversion; correlated or affiliated signals beyond IPV; dynamic
entry and disclosure; and optimal auditing technologies. Each of these
can interact with learning incentives and reserves in ways that either
reinforce or erode the alignment we document, offering a roadmap for
future research at the intersection of information design and auction
theory.

\smallskip{}
\bibliographystyle{ecta}
\bibliography{oda}

\end{document}